\newtheorem{lemma}{Lemma}
\newtheorem{theorem}{Theorem}
\newtheorem{proposition}{Proposition}
\newcommand{\eref}[1]{(\ref{#1})}
\newcommand{\sref}[1]{Section~\ref{#1}}
\newcommand{\appref}[1]{Appendix~\ref{#1}}
\newcommand{\fref}[1]{Figure~\ref{#1}}
\newcommand{\cref}[1]{Constraint~\ref{#1}}
\newcommand{\thref}[1]{Theorem~\ref{#1}}
\newcommand{\lref}[1]{Lemma~\ref{#1}}
\newcommand{\ignore}[1]{}
\begin{document}

\title{\vspace{-.5cm}Hybrid Scheduling/Signal-Level Coordination in the Downlink of Multi-Cloud Radio-Access Networks}

\author{
   \authorblockN{Ahmed Douik$^{\dagger}$, Hayssam Dahrouj$^\dagger$, Tareq Y. Al-Naffouri$^{\dagger\ast}$, and Mohamed-Slim Alouini$^\dagger$\\}%
   \authorblockA{$^\dagger$King Abdullah University of Science and Technology (KAUST), Kingdom of Saudi Arabia \\
    $^\ast$King Fahd University of Petroleum and Minerals (KFUPM), Kingdom of Saudi Arabia \\
    Email: $^\dagger$\{ahmed.douik,hayssam.dahrouj,tareq.alnaffouri,slim.alouini\}@kaust.edu.sa \\
    $^\ast$naffouri@kfupm.edu.sa
    }
\vspace{-.8cm}
}

\maketitle
 
\begin{abstract}
In the context of resource allocation in cloud-radio access networks, recent studies assume either signal-level or scheduling-level coordination. This paper, instead, considers a hybrid level of coordination for the scheduling problem in the downlink of a multi-cloud radio-access network, as a means to benefit from both scheduling policies. Consider a multi-cloud radio access network, where each cloud is connected to several base-stations (BSs) via high capacity links, and therefore allows joint signal processing between them. Across the multiple clouds, however, only scheduling-level coordination is permitted, as it requires a lower level of backhaul communication. The frame structure of every BS is composed of various time/frequency blocks, called power-zones (PZs), and kept at fixed power level. The paper addresses the problem of maximizing a network-wide utility by associating users to clouds and scheduling them to the PZs, under the practical constraints that each user is scheduled, at most, to a single cloud, but possibly to many BSs within the cloud, and can be served by one or more distinct PZs within the BSs' frame. The paper solves the problem using graph theory techniques by constructing the conflict graph. The scheduling problem is, then, shown to be equivalent to a maximum-weight independent set problem in the constructed graph, in which each vertex symbolizes an association of cloud, user, BS and PZ, with a weight representing the utility of that association. Simulation results suggest that the proposed hybrid scheduling strategy provides appreciable gain as compared to the scheduling-level coordinated networks, with a negligible degradation to signal-level coordination.
\end{abstract}

\begin{keywords}
Multi-cloud networks, coordinated scheduling, scheduling-level coordination, signal-level coordination, maximum-weight independent set problem.
\end{keywords}

\section{Introduction}\label{sec:int}

Next generation mobile radio systems ($5$G) are expected to undergo major architectural changes, so as to support the deluge in demand for mobile data services and increase capacity, energy efficiency and latency reduction \cite{6824752,6736746}. One way to boost throughput and coverage in dense data networks is by moving from the single high-powered base-station (BS) to the massive deployment of overlaying BSs of different sizes. Such architecture, however, is subject to high inter BS interference, especially with the full spectrum reuse strategy set by $5$G. Traditionally, interference mitigation is performed by coordinating the different BSs through massive signalling and message exchange. Such coordination technique, however, in addition to being energy inefficient \cite{6983623}, may not always be feasible given the capacity limits of the backhaul links.

A promising interference mitigation technique is the coordinated multi-point (CoMP) transmission \cite{5594708} that is obtained by connecting the different BSs to a central unit, known as the cloud, to form the so-called cloud radio access network (CRAN). Such configuration moves most of the fundamental network functionalities to the cloud side, thereby allowing a separation between the control plane and the data plane. The virtualization in CRANs provides the potential for efficient resources utilisation, joint BSs operation (joint transmission, encoding and decoding), and effective energy control.

Different levels of coordination in CRANs are studied in the past literature, namely the signal-level \cite{6799231,6588350,6786060} coordination and the scheduling-level \cite{6525475,6811617,117665} one. In signal-level coordinated CRANs \cite{6799231,6588350,6786060}, all the data streams of different users are shared among the different BSs allowing joint operations. However, such level of coordination necessitates considerable backhaul communication. On the other extreme, in scheduling-level coordinated CRANs \cite{6525475,6811617,117665}, the cloud is responsible only for the efficient allocation of the resource blocks of each BS which clearly requires much less backhaul communication. While more practical to implement, preventing joint signal processing in scheduling-level coordination limits the system performances. To benefit from the advantages of both scheduling policies, this paper proposes a hybrid scheduling scheme.

Consider the downlink of a multi-CRAN, where each cloud is connected to several BSs. The frame structure of every BS is composed of various time/frequency blocks, called power-zones (PZs), kept at fixed power level. This paper proposes a hybrid level of coordination for the scheduling problem. For BSs connected to the same cloud, associating users to PZs is performed assuming signal-level coordination. Across the multiple clouds, only scheduling-level coordination is permitted, as it requires a lower level of backhaul communication.

In this paper context, the hybrid scheduling problem denotes determining the strategy of assigning users to clouds across the network, under the system limitation that each user is scheduled at most to a single cloud otherwise inter-cloud signal-level coordination is required. However, across the connected BSs in one cloud, users can be connected to multiple BSs and different PZs within each transmit frame. Moreover, each PZ is scheduled to exactly one user. The paper is related in part to the classical works on scheduling, and in part to the multi-CRANs. In the classical literature of cellular systems, scheduling is often performed assuming a prior assignment of users to BSs, for example the popular proportionally fair scheduling investigated in \cite{6525475,5464705}. In CRANs, recent works on coordinated scheduling assume a single cloud processing, for example \cite{6811617,117665}. This paper is further related to the multi-cloud setup studied in \cite{6799231}, which, however, assumes a known users to clouds association.

The paper considers the coordinated scheduling in multi-CRAN with an objective of maximizing a generic utility function. The paper's main contribution is to solve the problem optimally using techniques inherited from graph theory, by constructing the conflict graph in which each vertex represents an association of cloud, user, BS and PZ. It reformulates the problem as a maximum-weight independent set problem that can be solved using efficient algorithms \cite{15522856,2155446,6848102,5341909}. The paper further considers each of the scheduling policy, i.e., either scheduling-level or signal-level coordination. It shows that each of these scheduling policies can be obtained using similar techniques. Simulation results suggest that the proposed hybrid scheduling strategy provides appreciable gain as compared to the scheduling-level coordinated networks, with a negligible degradation to signal-level coordination.

The rest of this paper is organized as follows: In \sref{sec:sys}, the system model and the problem formulation are presented. \sref{sec:mul} proposes a solution to the hybrid scheduling problem. \sref{sec:ful} presents the scheduling solution of signal and scheduling level coordinated networks. Simulation results are discussed in \sref{sec:sim} before concluding the paper in \sref{sec:con}.

\section{System Model and Problem Formulation}\label{sec:sys}

\subsection{System Model and Parameters}

Consider the downlink of a multi-CRAN of $C$ clouds serving $U$ users in total. The $C$ clouds are connected to a central cloud. Each cloud (except the central one), is connected to $B$ BSs and is responsible for the signal-level coordination of the connected BSs. \fref{fig:network1} illustrates a multi-CRAN formed by $U=21$ users, and $C=3$ clouds each coordinating $B=3$ BSs. Let $\mathcal{C}$ be the set of clouds in the system each coordinating the set of BSs $\mathcal{B}$. Let $\mathcal{U}$ be the set of users in the network ($|\mathcal{U}|= U$, where the notation $|\mathcal{X}|$ refers to the cardinality of a set $\mathcal{X}$). The transmit frame of each BS is composed of several time/frequency resource blocks maintained at fixed transmit power. In this paper, the generic term PZ is used to refer to a time/frequency resource block of a BS. Let $\mathcal{Z}$ be the set of the $Z$ PZs of the frame of one BS. The transmit power of the $z$th PZ in the $b$th BS of the $c$th cloud is fixed to $P_{cbz}$, $\forall \ (c,b,z) \in \mathcal{C} \times \mathcal{B} \times \mathcal{Z}$, where the notation $\mathcal{X} \times \mathcal{Y}$ refers to the Cartesian product of the two sets $\mathcal{X}$ and $\mathcal{Y}$. \fref{fig:frame} depicts the coordinated frames of the connected BSs in the $c$th cloud. This paper focuses on the scheduling optimization step for a fixed transmit paper. Optimization with respect to the power values $P_{cbz}$ is left for future research.

\begin{figure}[t]
\centering
\includegraphics[width=0.9\linewidth]{./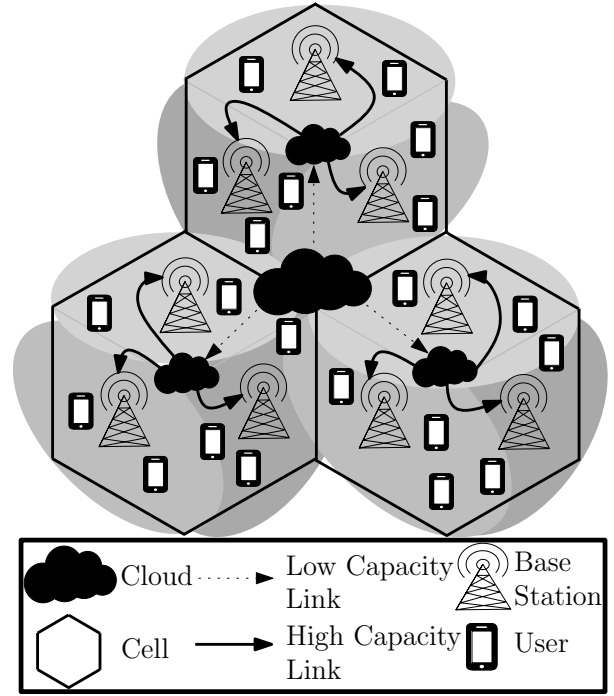}
\caption{Cloud enabled network composed $3$ cells, each containing $3$ base stations and $7$ users.}\label{fig:network1}
\end{figure}
\begin{figure}[t]
\centering
\includegraphics[width=0.9\linewidth]{./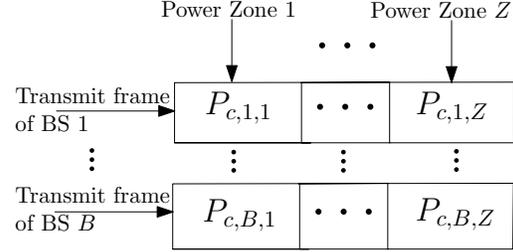}
\caption{Frame structure of $B$ base stations each containing $Z$ power zones.}\label{fig:frame}
\end{figure}

Each cloud $c \in \mathcal{C}$ is responsible for coordinating the different $B$ BSs allowing joint signal processing. The central cloud connecting all the clouds $c \in \mathcal{C}$, responsible for the scheduling policy, guarantees that the transmission of the different frames are synchronized across all BSs in the network ($CB$ BSs). Let $h_{cbz}^{u} \in \mathds{C},\ \forall \ (c,u,b,z) \in \mathcal{C} \times \mathcal{U} \times \mathcal{B} \times \mathcal{Z}$ be the complex channel gain from the $b$th BS of the $c$th cloud to user $u$ scheduled to PZ $z$. The signal-to-interference plus noise-ratio (SINR) of user $u$ when scheduled to PZ $z$ in the $b$th BS of the $c$th cloud can be expressed as:
\begin{align}
\text{SINR}_{cbz}^{u} = \cfrac{P_{cbz} |h_{cbz}^{u}|^2}{\Gamma(\sigma^2+ \sum\limits_{(c^\prime,b^\prime) \neq (c,b)} P_{c^\prime b^\prime z}|h_{c^\prime b^\prime z}^{u}|^2)},
\end{align}
where $\Gamma(.)$ denotes the SINR gap, and $\sigma^2$ is the Gaussian noise variance.

\subsection{Scheduling Problem Formulation}

The scheduling problem under investigation in this paper consists of assigning users to clouds and scheduling them to PZs in each BS frame under the following practical constraints.
\begin{itemize}
\item C1: Each user can connect at most to one cloud but possibly to many BSs in that cloud.
\item C2: Each PZ should be allocated to exactly one user.
\item C3: Each user cannot be served by the same PZ across different BSs.
\end{itemize}

Let $\pi_{cubz}$ be a generic network-wide benefit of assigning user $u$ to the $z$th PZ of the $b$th BS in the $c$th cloud. Let $X_{cubz}$ be a binary variable that is $1$ if user $u$ is mapped to the $z$th PZ of the $b$th BS in the $c$th cloud, and zero otherwise. Similarly, let $Y_{uz}$ be a binary variable that is $1$ if user $u$ is mapped to the $z$th PZ of any BS across the network, and zero otherwise. Further, let $Z_{cu}$ be a binary variable that is $1$ if user $u$ is assigned to cloud $c$. The scheduling problem this paper addresses can be formulated as the following 0-1 mixed integer programming problem:
\begin{subequations}
\label{Original_optimization_problem}
\begin{align}
\max & \sum_{c,u,b,z}\pi_{cubz}X_{cubz} \\
\label{constraint1}{\rm s.t.\ } &Z_{cu} = 1 - \delta \bigg(\sum_{b,z}X_{cubz}\bigg), \forall \ (c,u)\in\mathcal{C} \times \mathcal{U},\\
\label{constraint2}& \sum_{c}Z_{cu} \leq 1 , \quad \forall \ u \in\mathcal{U},\\
\label{constraint3}& \sum_{u}X_{cubz}=1, \quad\forall \ (c,b,z)\in \mathcal{C} \times \mathcal{B} \times \mathcal{Z},\\
\label{constraint4}& Y_{uz} = \sum_{cb} X_{cubz} \leq 1,\quad \forall \ (u,z) \in \mathcal{U} \times \mathcal{Z}, \\
\label{constraint5}& X_{cubz},Y_{uz},Z_{cu} \in \{0,1\},
\end{align}
\end{subequations}
where the optimization is over the binary variables $X_{cubz}$, $Y_{uz}$, and $Z_{cu}$ and the notation $\delta(.)$ refers to the discrete Dirac function which is equal to $1$ if its argument is equal to $0$ and $0$ otherwise. Both the equality constraint \eref{constraint1} and the inequality constraint \eref{constraint2} are due to system constraint C1. The equality constraints \eref{constraint3} and \eref{constraint4} correspond to the system constraints C2 and C3, respectively.

Using a generic solver for 0-1 mixed integer programs may require a search over the whole feasible space of solutions, i.e., all possible assignments of users to clouds and PZs of the network BSs. The complexity of such method is prohibitive for any reasonably sized network. The next section, instead, presents a more efficient method to solve the problem by constructing the conflict graph in which each vertex represents an association between clouds, users, BSs, and PZs. The paper reformulates the 0-1 mixed integer programming problem \eref{Original_optimization_problem} as a maximum-weight independent set problem in the conflict graph, which global optimum can be reached using efficient techniques, e.g., \cite{15522856,2155446}.

\section{Multi-Cloud Coordinated Scheduling}\label{sec:mul}

This section presents the optimal solution to the optimization problem \eref{Original_optimization_problem} by introducing the conflict graph and reformulating the problem as a maximum-weight independent set problem. The corresponding solution is naturally centralized, and the computation must be carried at the central cloud connecting all the clouds $c \in \mathcal{C}$.

\subsection{Conflict Graph Construction}

Define $\mathcal{A} = \mathcal{C} \times \mathcal{U} \times \mathcal{B} \times \mathcal{Z}$ as the set of all associations between clouds, users, BSs, and PZs, i.e., each element $a \in \mathcal{A}$ represents the association of one user to a cloud and a PZ in one of the connected BSs frame. For each association $a=(c,u,b,z) \in \mathcal{A}$, let $\pi(a)$ be the benefit of such association defined as $\pi(a)=\pi_{cubz}$. Let $\varphi_c$ be the cloud association function that maps each element from the set $\mathcal{A}$ to the corresponding cloud in the set $\mathcal{C}$. In other words, for $a=(c,u,b,z) \in \mathcal{A}$, $\varphi_c(a)=c$. Likewise, let $\varphi_u$, $\varphi_b$, and $\varphi_z$ be the association functions mapping each element $a=(c,u,b,z) \in \mathcal{A}$ to the set of users $\mathcal{U}$ (i.e., $\varphi_u(a)=u$), the set of BSs $\mathcal{B}$ (i.e., $\varphi_b(a)=b$), and the set of PZs (i.e., $\varphi_z(a)=z$), respectively.

The \emph{conflict} graph $\mathcal{G}(\mathcal{V},\mathcal{E})$ is an undirected graph in which each vertex represents an association of cloud, user, BS and PZ. Each edge between vertices represents a conflict between the two corresponding associations. Therefore, the conflict graph can be constructed by generating a vertex $v \in \mathcal{V}$ for each association $a \in \mathcal{A}$. Vertices $v$ and $v^\prime$ are conflicting vertices, and thus connected by an edge in $\mathcal{E}$ if one of the following connectivity conditions (CC) is true:
\begin{itemize}
\item CC1: $\delta(\varphi_u(v) - \varphi_u(v^\prime))(1-\delta(\varphi_c(v) - \varphi_c(v^\prime))) = 1$.
\item CC2: $(\varphi_c(v),\varphi_b(v),\varphi_z(v)) = (\varphi_c(v^\prime),\varphi_b(v^\prime),\varphi_z(v^\prime))$.
\item CC3: $\delta(\varphi_u(v) - \varphi_u(v^\prime))\delta(\varphi_z(v) - \varphi_z(v^\prime)) = 1$.
\end{itemize}

The connectivity constraint CC1 corresponds to a violation of the system constraint C1 as it describes that two vertices are conflicting if the same user is scheduled to different clouds. The connectivity constraint CC2 partially illustrates the system constraint C2, as it implies that each PZ should be associated to at most one user (not exactly one user as stated in the original system constraint). With the additional constraint (see \thref{th1} below) about the size of the independent set, CC2 becomes equivalent to C2. Finally, the edge creation condition CC3 perfectly translates a violation of the system constraint C3.

\begin{figure}[t]
\centering
\includegraphics[width=\linewidth]{./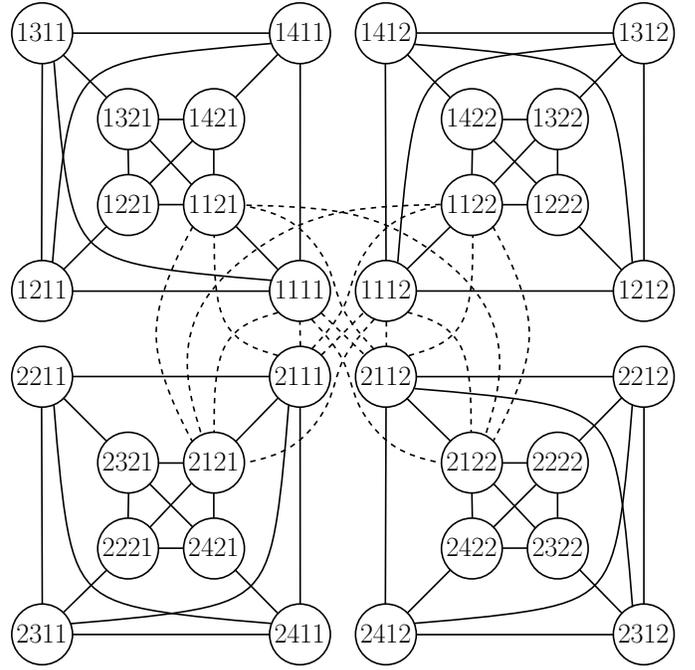}\\
\caption{Example of the conflict graph for a network composed of $2$ clouds, $2$ BSs per cloud, $2$ PZs per BS and a total of $4$ users. Intra-cloud connection are plotted in solid lines. Inter-cloud connections are illustrated only for user $1$ in dashed lines.}
\label{fig:graph}
\end{figure}

\fref{fig:graph} illustrates an example of the conflict graph in a multi-cloud system composed of $C=2$ clouds, $B=2$ BSs per cloud, $Z=2$ PZs per BS and $U=4$ users. Vertices, in this example, are labelled $cubz$, where $c$, $u$, $b$ and $z$ represent the indices of cloud, users, BSs, and PZs, respectively. In this example, there exist $4 \cdot 24 = 96$ independent sets of size $Z_{\text{tot}}=8$ that can be written in the following form:
\begin{enumerate}
\item \{$1a11$,\ $1a12$,\ $1b21$,\ $1b22$,\ $2c11$,\ $2c12$,\ $2d21$,\ $2z22$\}
\item \{$1a11$,\ $1a12$,\ $1b21$,\ $1b22$,\ $2c11$,\ $2d12$,\ $2d21$,\ $2y22$\}
\item \{$1a11$,\ $1b12$,\ $1b21$,\ $1a22$,\ $2c11$,\ $2c12$,\ $2d21$,\ $2z22$\}
\item \{$1a11$,\ $1b12$,\ $1b21$,\ $1a22$,\ $2c11$,\ $2d12$,\ $2d21$,\ $2y22$\},
\end{enumerate}
where $a,b,c,d \in \{1,2,3,4\}$ with $a \neq b \neq c \neq d$ ($24$ distinct permutations).

\subsection{Scheduling Solution}

Consider the conflict graph $\mathcal{G}(\mathcal{V},\mathcal{E})$ constructed above and let $\mathcal{I}$ be the set of all independent set of vertices of size $Z_{\text{tot}} = CBZ$. The following theorem characterises the solution of the optimization problem \eref{Original_optimization_problem}.

\begin{theorem}
The global optimal solution to the scheduling problem in multi-cloud network \eref{Original_optimization_problem} is the maximum-weight independent set among the independent sets of size $Z_{\text{tot}}$ in the conflict graph, where the weight of each vertex $v \in \mathcal{V}$ is given by:
\begin{align}
w(v) = \pi(v).
\end{align}
In other words, the optimal solution of the scheduling problem \eref{Original_optimization_problem} can be expressed as:
\begin{align}
I^* = \arg \max_{I \in \mathcal{I}} \sum_{v \in I} w(v).
\end{align}
\label{th1}
\end{theorem}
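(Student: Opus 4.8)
The plan is to prove \thref{th1} by exhibiting a value-preserving bijection between the set of feasible points of the $0$-$1$ program \eref{Original_optimization_problem} and the collection $\mathcal{I}$ of independent sets of size $Z_{\text{tot}}=CBZ$ in the conflict graph $\mathcal{G}(\mathcal{V},\mathcal{E})$. Since the objective $\sum_{c,u,b,z}\pi_{cubz}X_{cubz}$ of a feasible point will equal $\sum_{v\in I}w(v)$ for the associated independent set $I$ (because $w(v)=\pi(v)=\pi_{cubz}$), maximizing over one domain is the same as maximizing over the other, and the maximizer is precisely the set $I^*$ claimed in the theorem. So the whole argument reduces to checking that the two feasible domains correspond, and for this I would match each connectivity condition CCi with the system constraint Ci it encodes.

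For the forward direction, I would start from a feasible $(X,Y,Z)$ and set $I=\{a=(c,u,b,z)\in\mathcal{A}:X_{cubz}=1\}$. Constraint \eref{constraint3} forces exactly one user per triple $(c,b,z)$, hence $|I|=CBZ=Z_{\text{tot}}$. I then verify $I$ carries no edge of $\mathcal{G}$: two vertices joined by a CC2 edge share the same $(c,b,z)$ and cannot both lie in $I$ without violating \eref{constraint3}; two vertices joined by a CC3 edge share the same pair $(u,z)$ and would make $Y_{uz}=\sum_{cb}X_{cubz}\geq 2$, contradicting \eref{constraint4}; two vertices joined by a CC1 edge share the same user in two distinct clouds, which via \eref{constraint1} forces $Z_{cu}=Z_{c'u}=1$ for $c\neq c'$ and violates \eref{constraint2}. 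Thus $I\in\mathcal{I}$ and its weight equals the objective value.

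For the reverse direction, I would take $I\in\mathcal{I}$, define $X_{cubz}=1$ iff $(c,u,b,z)\in I$, and then define $Y_{uz}$ and $Z_{cu}$ through \eref{constraint4} and \eref{constraint1}, so that \eref{constraint1} and the displayed form of $Y$ hold by construction. The absence of CC2 edges gives $\sum_u X_{cubz}\leq 1$ for every $(c,b,z)$; since there are exactly $CBZ$ such triples and $|I|=CBZ$, a pigeonhole argument upgrades this to $\sum_u X_{cubz}=1$, which is \eref{constraint3}. The absence of CC3 edges gives $\sum_{cb}X_{cubz}\leq 1$, i.e.\ \eref{constraint4}, and in particular $Y_{uz}\in\{0,1\}$, so \eref{constraint5} holds. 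The absence of CC1 edges forbids a user from appearing in two clouds, so at most one $Z_{cu}$ is nonzero for each $u$, yielding \eref{constraint2}. The objective again equals $\sum_{v\in I}w(v)$, so the two problems are equivalent and the characterization of $I^*$ follows.

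The step needing the most care is the pigeonhole argument in the reverse direction: condition CC2 only encodes ``at most one user per PZ,'' whereas C2 demands ``exactly one.'' This gap is closed precisely by restricting to independent sets of the maximal cardinality $Z_{\text{tot}}=CBZ$; one must note both that no independent set can exceed this size — again because at most one vertex survives per $(c,b,z)$ triple and there are $CBZ$ triples — and that attaining the bound $|I|=CBZ$ forces a bijection between $I$ and the set of triples. Everything else is a routine check that each CCi is the exact graph-theoretic transcription of Ci, together with the trivial observation that the vertex weights reproduce the objective term by term.
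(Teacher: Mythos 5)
Your proposal is correct and follows essentially the same route as the paper's proof: a value-preserving bijection between feasible points and independent sets of size $Z_{\text{tot}}$, obtained by matching each connectivity condition to the constraint it encodes (the paper merely inserts an intermediate reformulation in terms of feasible schedules, via its Lemma 2, before mapping to the graph). Your explicit pigeonhole argument closing the gap between ``at most one user per PZ'' (CC2) and ``exactly one'' (C2) via the cardinality restriction $|I|=CBZ$ is exactly the point the paper relies on, and you state it more carefully than the paper does.
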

\begin{proof}
The proof can be found in \appref{app1}.
\end{proof}

In graph theory context, an independent set is a set in which each two vertices are not adjacent. The maximum-weight independent set problem is the problem of finding, in a weighted graph, the independent set(s) with the maximum weight where the weight of the set is defined as the sum of the individual weights of vertices belonging to the set. Maximum-weight independent set problems are well-known NP-hard problems. However, they can be solved efficiently, e.g., \cite{15522856,2155446}. Moreover, several approximate \cite{6848102} and polynomial time \cite{5341909} solving methods produce satisfactory results, in general.

\section{Extremes in Coordination Schemes}\label{sec:ful}

Different coordination levels are presented in this section. The fully coordinated, also known as the signal-level coordinated system, represents the optimal network design from a throughput point of view. However, such scheme requires a substantial amount of backhaul communication to share all the data streams between all BSs. The first part of this section illustrates the optimal scheduling in such coordination.

The second part of the section investigates the other scheduling-level extreme, wherein, user's data are processed in a single base-station. Such scheme, also known as scheduling-level coordination, has the merit of being cost effective since it requires only low capacity links to connect all BSs and clouds in the network. The second part of this section investigates the optimal user to BS and PZs assignment in such coordination. 

\subsection{Signal-Level Coordination}

For signal-level coordinated systems, all the data streams of users are shared among the BSs across the network. Hence, a user can be scheduled to many BSs in different clouds. The scheduling problem becomes the one of assigning users to clouds and scheduling them to PZs in each BS frame under the following practical constraints.
\begin{itemize}
\item Each PZ should be allocated to exactly one user.
\item Each user cannot be served by the same PZ across different BSs.
\end{itemize}

Following an analysis similar to the one in \sref{sec:mul}, the scheduling problem can be formulated as a 0-1 mixed integer programming as follows:
\begin{subequations}
\label{full_problem}
\begin{align}
\max& \sum_{c,u,b,z}\pi_{cubz}X_{cubz} \\
\label{constraint6} {\rm s.t.\ }& \sum_{u}X_{cubz}=1, \quad\forall \ (c,b,z)\in \mathcal{C} \times \mathcal{B} \times \mathcal{Z},\\
\label{constraint7}& \sum_{cb} X_{cubz} \leq 1,\quad \forall \ (u,z) \in \mathcal{U} \times \mathcal{Z}, \\
\label{constraint8}& X_{cubz} \in \{0,1\}, \forall \ (c,u,b,z) \in \mathcal{C} \times \mathcal{U}\times \mathcal{B}\times\mathcal{Z},
\end{align}
\end{subequations}
where the optimization is over the binary variable $X_{cubz}$, and where equations \eref{constraint6} and \eref{constraint7} correspond to the first and second system constraints, respectively. The following lemma provides the optimal solution to the optimization problem \eref{full_problem}.

\begin{lemma}
The optimal solution to the scheduling problem in signal-level coordinated cloud-enabled network \eref{full_problem} is the maximum-weight independent set of size $CBZ$ in the reduced conflict graph which is constructed in a similar manner as the conflict graph but using only connectivity constraint CC2 and CC3.
\label{l1}
\end{lemma}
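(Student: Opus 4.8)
The plan is to mirror the argument behind \thref{th1}, but with the simpler constraint structure of \eref{full_problem}, since the cloud-association constraints C1 (and hence the connectivity condition CC1) are now absent. The core is to exhibit a weight-preserving bijection between the feasible set of \eref{full_problem} and the collection of independent sets of size $CBZ$ in the reduced conflict graph, whose edge set retains only the edges generated by CC2 and CC3 while keeping the same vertex set $\mathcal{A}=\mathcal{C}\times\mathcal{U}\times\mathcal{B}\times\mathcal{Z}$.

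First I would map any feasible assignment $\{X_{cubz}\}$ to the vertex set $I=\{\,v=(c,u,b,z)\in\mathcal{A}:X_{cubz}=1\,\}$. Constraint \eref{constraint6} makes the number of set bits exactly $CBZ$ (one per triple $(c,b,z)$), so $|I|=CBZ$. Two distinct vertices $v,v'\in I$ cannot satisfy CC2, since that would require the same $(c,b,z)$ to be assigned to two different users, contradicting \eref{constraint6}; and they cannot satisfy CC3, since that would require a user served by the same PZ in two different BSs, contradicting \eref{constraint7}. Hence $I$ is an independent set of the reduced conflict graph with the required size.

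Conversely, given an independent set $I$ of size exactly $CBZ$, I would set $X_{cubz}=1$ iff $(c,u,b,z)\in I$ and verify feasibility. CC2-independence says each triple $(c,b,z)$ carries at most one vertex of $I$; since there are exactly $CBZ$ such triples and $|I|=CBZ$, a counting (pigeonhole) argument forces each triple to carry exactly one vertex, i.e.\ \eref{constraint6} holds. CC3-independence immediately yields \eref{constraint7}, and \eref{constraint8} is automatic from the construction. Because $\pi(v)=\pi_{cubz}=w(v)$, the objective value $\sum_{c,u,b,z}\pi_{cubz}X_{cubz}=\sum_{v\in I}w(v)$ is preserved under the correspondence in both directions, so the global maximum of \eref{full_problem} coincides with the maximum-weight independent set of size $CBZ$ in the reduced conflict graph.

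The main obstacle — already flagged in the discussion preceding \thref{th1} — is the gap between ``at most one user per PZ'' (what CC2 alone enforces in the graph) and ``exactly one user per PZ'' (what \eref{constraint6} demands). The size restriction $|I|=CBZ$ is precisely what closes this gap, and I would make the counting argument explicit: there are $CBZ$ distinct triples $(c,b,z)$, each admitting at most one vertex of the independent set, so $|I|=CBZ$ forces equality for every triple. Everything else reduces to a routine check that each connectivity condition CC2, CC3 is the exact negation of the corresponding feasibility constraint \eref{constraint6}, \eref{constraint7}.
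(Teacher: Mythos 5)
Your proof is correct and follows essentially the same route as the paper: a weight-preserving one-to-one correspondence between feasible solutions of \eref{full_problem} and independent sets of size $CBZ$ in the reduced conflict graph, with the size constraint closing the gap between ``at most one'' and ``exactly one'' user per PZ. You make the pigeonhole step explicit where the paper merely defers to the steps of \thref{th1}, but the underlying argument is the same.
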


\begin{proof}
The proof can be found in \appref{app4}.
\end{proof}

\subsection{Scheduling-Level Coordination}

In scheduling-level coordinated CRAN, the cloud is only responsible for scheduling users to BSs and PZs and synchronizing the transmit frames across the various BSs. In such coordinated systems, the scheduling problem is the one of assigning users to BSs and PZs under the following system constraints:
\begin{itemize}
\item Each user can connect at most to one BS but possibly to many PZs in that BS.
\item Each PZ should be allocated to exactly one user.
\end{itemize}

The scheduling problem can, then, be formulated as follows:
\begin{subequations}
\label{no_problem}
\begin{align}
\max & \sum_{c,u,b,z}\pi_{cubz}X_{cubz} \\
\label{constraint12}{\rm s.t.\ } &Y_{cub} = \min\bigg(\sum_{z}X_{cubz},1\bigg), \quad \forall \ (c,u,b) ,\\
\label{constraint11}& \sum_{c,b}Y_{cub}\leq 1,\quad \forall \ u\in\mathcal{U},\\
\label{constraint13}& \sum_{u}X_{cubz}=1, \quad \forall \ (c,b,z)\in \mathcal{C} \times \mathcal{B}\times\mathcal{Z},\\
\label{constraint14}& X_{cubz},Y_{cub} \in \{0,1\}, \quad \forall \ (c,u,b,z),
\end{align}
\end{subequations}
where the optimization is over the binary variables $X_{cubz}$ and $Y_{cub}$, where the constraints in \eref{constraint12} and \eref{constraint11} correspond to first system constraint, and where the equality constraint in \eref{constraint13} corresponds to the second system constraint.

Let the scheduling conflict graph $\mathcal{G}(\mathcal{V},\mathcal{E})$ be a constructed by generating a vertex $v \in \mathcal{V}$ for each association $a \in \mathcal{A}$. Vertices $v$ and $v^\prime$ are conflicting vertices, and thus connected by an edge in $\mathcal{E}$ if one of the following connectivity conditions is true:
\begin{itemize}
\item $\delta(\varphi_u(v) - \varphi_u(v^\prime))(1-\delta(\varphi_c(v) - \varphi_c(v^\prime))) = 1$.
\item $(\varphi_c(v),\varphi_b(v),\varphi_z(v)) = (\varphi_c(v^\prime),\varphi_b(v^\prime),\varphi_z(v^\prime))$.
\item $\delta(\varphi_u(v) - \varphi_u(v^\prime))(1-\delta(\varphi_b(v) - \varphi_b(v^\prime))) = 1$.
\end{itemize}

The following proposition characterizes the solution of the scheduling problem in scheduling-level coordinated CRANs:
\begin{proposition}
The optimal solution to the optimization problem \eref{no_problem} is the maximum-weight independent set of size $CBZ$ in the scheduling conflict graph.
\end{proposition}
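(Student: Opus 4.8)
The plan is to build a value-preserving bijection between the feasible points of \eref{no_problem} and the independent sets of size $CBZ$ in the scheduling conflict graph, mirroring the proof of \thref{th1}. Because the auxiliary variables $Y_{cub}$ are pinned down by the $X_{cubz}$ through \eref{constraint12}, it is enough to track $X$. To a feasible $X$ I associate the vertex set $I = \{\, (c,u,b,z) \in \mathcal{A} : X_{cubz} = 1 \,\}$. Constraint \eref{constraint13} puts exactly one user on each triple $(c,b,z)$, so $|I| = CBZ$, and, read for a fixed triple, it also forbids two members of $I$ with the same $(\varphi_c,\varphi_b,\varphi_z)$, i.e. no edge of the second connectivity type is present. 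Feasibility of \eref{constraint12}–\eref{constraint11} says each user is carried by at most one BS of at most one cloud, which excludes edges of the first type (same user, different cloud) and of the third type (same user, different BS index); thus $I$ is independent. Its weight $\sum_{v \in I} w(v)$ equals $\sum_{(c,u,b,z) \in I} \pi_{cubz}$, which is exactly the objective of \eref{no_problem} at $X$.

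For the reverse map, take any independent $I$ with $|I| = CBZ$, set $X_{cubz} = 1$ iff $(c,u,b,z) \in I$, and define $Y_{cub}$ via \eref{constraint12}. Independence under the second connectivity condition gives at most one vertex of $I$ per triple $(c,b,z)$; since there are exactly $CBZ$ such triples and $|I| = CBZ$, a pigeonhole count forces exactly one vertex per triple, which is precisely \eref{constraint13}. Independence under the first and third conditions forces every user appearing in $I$ to appear only in vertices sharing one common cloud index and one common BS index, i.e. in a single BS, so \eref{constraint11} holds once the definition of $Y_{cub}$ is substituted. Hence $(X,Y)$ is feasible, and the objective again equals the weight of $I$. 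The two maps are mutual inverses and preserve the value, so the maximum of the objective over feasible $X$ coincides with the maximum weight over independent sets of size $CBZ$, and an optimal schedule is read off from any maximum-weight such set. This is consistent with \lref{l1}: dropping the inter-cloud restriction on users removes the first connectivity condition and recovers the reduced conflict graph there.

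As in \thref{th1} and \lref{l1}, the delicate point is the counting step that promotes the ``at most one user per PZ'' encoded by the second connectivity condition to the equality \eref{constraint13}; this works only because we restrict to independent sets of the exact size $CBZ$, and because no independent set of the scheduling conflict graph can exceed size $CBZ$ (again by the per-triple bound), so that the restriction is non-vacuous and coincides with the notion of a maximum-cardinality independent set. Everything else is the routine bookkeeping of checking each connectivity condition against the corresponding system constraint and back.
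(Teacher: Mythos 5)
Your proof is correct and follows exactly the route the paper intends: the paper omits this proof, stating only that it ``follows the same steps as the proof of Theorem~1,'' and your value-preserving bijection between feasible $(X,Y)$ and independent sets of size $CBZ$ --- including the pigeonhole step that upgrades the ``at most one vertex per $(c,b,z)$'' from the second connectivity condition to the equality constraint, and the observation that the first and third conditions jointly pin each user to a single $(c,b)$ pair --- is precisely that adaptation. Nothing further is needed.
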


\begin{proof}
The proof of this result is omitted as it follows the same steps as the proof of \thref{th1}.
\end{proof}

\section{Simulation Results}\label{sec:sim}

\begin{table}
\centering
\caption{System model parameters}
\begin{tabular}{|c|c|}
\hline
Cellular Layout & Hexagonal \\
\hline
Cell-to-Cell Distance & 500 meters \\
\hline
{Channel Model} & SUI-3 Terrain type B \\
\hline
Channel Estimation & Perfect \\
\hline
High Power & -42.60 dBm/Hz \\
\hline
Background Noise Power & -168.60 dBm/Hz \\
\hline
SINR Gap $\Gamma$ & 0dB\\
\hline
Bandwidth & 10 MHz \\
\hline
\end{tabular}
\label{t1}
\end{table}

The performance of the proposed scheduling schemes is shown in this section in the downlink of a cloud-radio access network, similar to \fref{fig:network1}. For illustration purposes, the simulations focus on the sum-rate maximization problem, i.e., $\pi_{cubz} = \log_2 (1 + \text{SINR}^u_{cbz})$. In these simulations, the cell size is set to $500$ meters and users are uniformly placed within each cell. The number of clouds, users, base-stations per cloud and power-zone per base-station frame change in each figure in order to quantify the gain in various scenarios. Simulations parameters are displayed in Table \ref{t1}.

\begin{figure}[t]
\centering
\includegraphics[width=1\linewidth]{./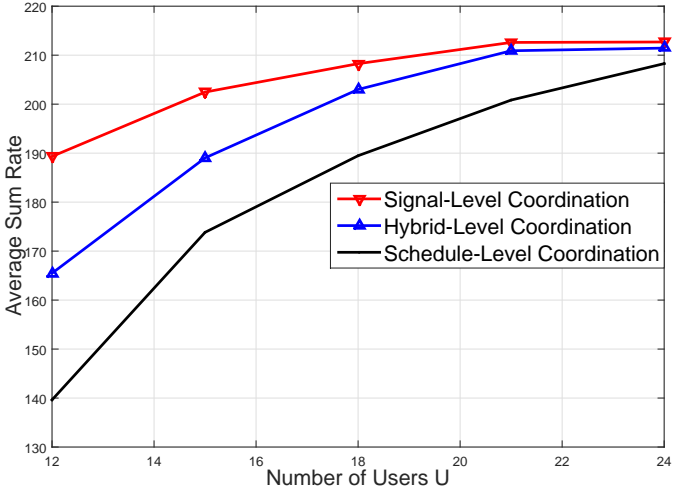}\\
\caption{Sum-rate in bps/Hz versus number of users $U$. Number of clouds is $C=3$ with $B=3$ base-stations per cloud, and $Z=5$ power-zones per BS's transmit frame.}\label{fig:U}
\end{figure}

\begin{figure}[t]
\centering
\includegraphics[width=1\linewidth]{./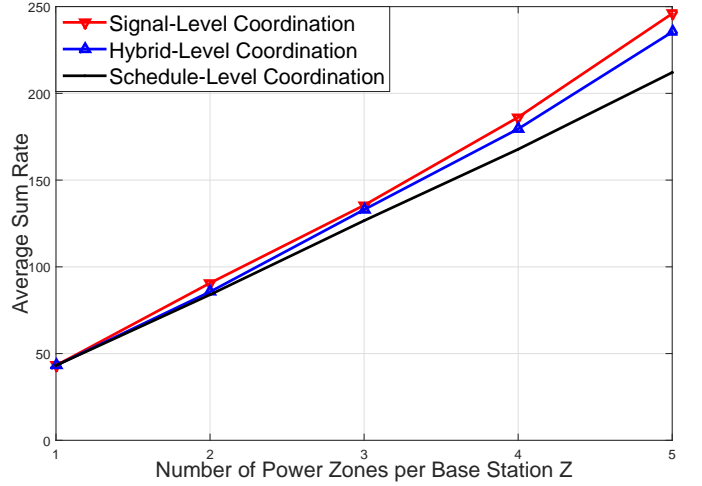}\\
\caption{Sum-rate in bps/Hz versus number of power-zones $Z$ per BS. Number of clouds is $C=3$ with $B=3$ base-stations per cloud, and $U=24$ users.}\label{fig:Z}
\end{figure}

\begin{figure}[t]
\centering
\includegraphics[width=1\linewidth]{./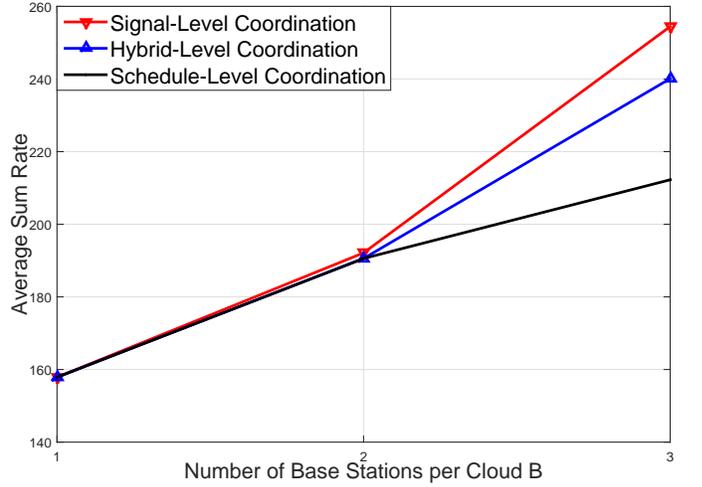}\\
\caption{Sum-rate in bps/Hz versus number of base-stations $B$ per cloud. Number of clouds is $C=3$ with $Z=5$ power-zones per BS's transmit frame, and $U=24$ users.}\label{fig:B}
\end{figure}

\begin{figure}[t]
\centering
\includegraphics[width=1\linewidth]{./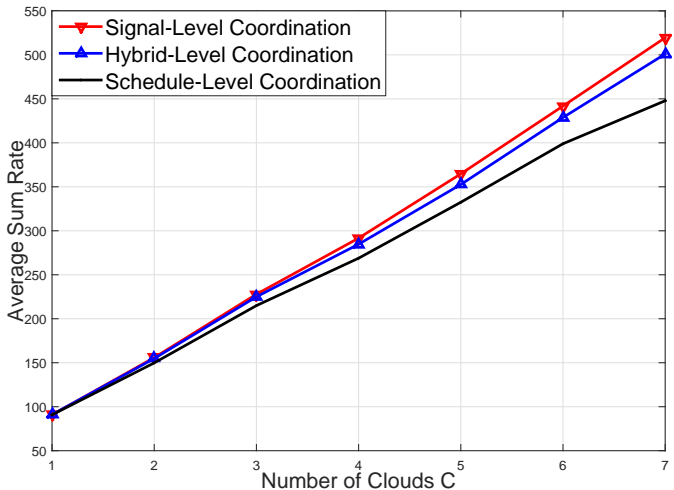}\\
\caption{Sum-rate in bps/Hz versus number of clouds $C$. Number of base-stations is $B=3$ per cloud, with $5$ power-zones per BS's transmit frame, and $U=8$ users per cloud.}\label{fig:C}
\end{figure}

\fref{fig:U} plots the sum-rate in bps/Hz versus the number of users $U$ for a CRAN composed of $C=3$ clouds, $B=3$ base-stations per cloud, and $Z=5$ power-zones per BS's transmit frame. The proposed hybrid coordination policy provides a significant gain against the scheduling-level coordinated system for a small number of users. As the number of users increases in the system, the different policies performs the same. This can be explained by the fact that as the number of users in the network increases, the probability that different users have the maximum pay-off in various PZs across the network increases which results in scheduling different users in different PZs and thus the different scheduling provide a similar performances.

\fref{fig:Z} plots the sum-rate in bps/Hz versus the number of power-zones $Z$ per BS for a network comprising $C=3$ clouds, $B=3$ base-stations, and $U=24$ users. From the system connectivity of the different policies, we clearly see that for a network comprising only one PZ per BS, the three scheduling are equivalent which explain the similar performance for $Z=1$. As the number of PZs per BS increases, the gap between the different coordinated systems increases. In fact, as the number of PZs increases, the ratio of users per PZ decreases and thus the role of the cloud as a scheduling entity becomes more pronounced.

\fref{fig:B} plots sum-rate in bps/Hz versus the number of base-stations $B$ per cloud for a network comprising $C=3$ clouds, $Z=5$ power-zones per BS's transmit frame, and $U=24$ users. For a small number of BSs, all the policies are equivalent and provide the same gain. However, as this number increases, the higher the level of coordination is, the more scheduling opportunities it offers. This explains the difference in performance as $B$ increases. We can see that our hybrid coordination provides a gain up to $13\%$ as compared to the scheduling-level coordinated network, for a degradation up to $6\%$ as compared to the signal-level coordination.

Finally, \fref{fig:C} plots the sum-rate in bps/Hz versus the number of clouds $C$ for a network comprising $B=3$ base-stations per cloud, $Z=5$ power-zones per BS's transmit frame, and $U=8$ users \emph{per cloud}. Again, our hybrid coordination provides a gain up to $12\%$ as compared with the scheduling-level coordination, for a negligible degradation up to $4\%$ against the signal-level coordinated system.

\section{Conclusion}\label{sec:con}

This paper considers the hybrid scheduling problem in the downlink of a multi-cloud radio-access network. The paper maximizes a network-wide utility under the practical constraint that each user is scheduled, at most, to a single cloud, but possibly to many BSs within the cloud and can be served by one or more distinct PZs within the BSs frame. The paper proposes a graph theoretical approach to solving the problem by introducing the conflict graph in which each vertex represents an association of cloud, user, BS and PZ. The problem is then reformulated as a maximum-weight independent set problem that can be efficiently solved. Finally, the paper shows that the optimal solution to the scheduling problem in different levels of system coordination can be obtained as a special case of the more general proposed system. Simulation results suggest that the proposed system architecture provides appreciable gain as compared to the scheduling-level coordinated networks, for a negligible degradation against the signal-level coordination.

\bibliographystyle{IEEEtran}
\bibliography{citations}

\begin{thebibliography}{10}
\providecommand{\url}[1]{#1}
\csname url@samestyle\endcsname
\providecommand{\newblock}{\relax}
\providecommand{\bibinfo}[2]{#2}
\providecommand{\BIBentrySTDinterwordspacing}{\spaceskip=0pt\relax}
\providecommand{\BIBentryALTinterwordstretchfactor}{4}
\providecommand{\BIBentryALTinterwordspacing}{\spaceskip=\fontdimen2\font plus
\BIBentryALTinterwordstretchfactor\fontdimen3\font minus
  \fontdimen4\font\relax}
\providecommand{\BIBforeignlanguage}[2]{{%
\expandafter\ifx\csname l@#1\endcsname\relax
\typeout{** WARNING: IEEEtran.bst: No hyphenation pattern has been}%
\typeout{** loaded for the language `#1'. Using the pattern for}%
\typeout{** the default language instead.}%
\else
\language=\csname l@#1\endcsname
\fi
#2}}
\providecommand{\BIBdecl}{\relax}
\BIBdecl

\bibitem{6824752}
J.~Andrews, S.~Buzzi, W.~Choi, S.~Hanly, A.~Lozano, A.~Soong, and J.~Zhang,
  ``What will 5{G} be?'' \emph{IEEE Journal on Selected Areas in
  Communications}, vol.~32, no.~6, pp. 1065--1082, June 2014.

\bibitem{6736746}
F.~Boccardi, R.~Heath, A.~Lozano, T.~Marzetta, and P.~Popovski, ``Five
  disruptive technology directions for 5{G},'' \emph{IEEE Communications
  Magazine}, vol.~52, no.~2, pp. 74--80, February 2014.

\bibitem{6983623}
M.~Peng, K.~Zhang, J.~Jiang, J.~Wang, and W.~Wang, ``Energy-efficient resource
  assignment and power allocation in heterogeneous cloud radio access
  networks,'' \emph{IEEE Transactions on Vehicular Technology}, vol.~PP,
  no.~99, pp. 1--1, 2014.

\bibitem{5594708}
D.~Gesbert, S.~Hanly, H.~Huang, S.~Shamai~Shitz, O.~Simeone, and W.~Yu,
  ``Multi-cell mimo cooperative networks: A new look at interference,''
  \emph{IEEE Journal on Selected Areas in Communications}, vol.~28, no.~9, pp.
  1380--1408, December 2010.

\bibitem{6799231}
S.-H. Park, O.~Simeone, O.~Sahin, and S.~Shamai, ``Inter-cluster design of
  precoding and fronthaul compression for cloud radio access networks,''
  \emph{IEEE Wireless Communications Letters}, vol.~3, no.~4, pp. 369--372, Aug
  2014.

\bibitem{6588350}
------, ``Joint precoding and multivariate backhaul compression for the
  downlink of cloud radio access networks,'' \emph{IEEE Transactions on Signal
  Processing}, vol.~61, no.~22, pp. 5646--5658, Nov 2013.

\bibitem{6786060}
Y.~Shi, J.~Zhang, and K.~Letaief, ``Group sparse beamforming for green
  cloud-ran,'' \emph{IEEE Transactions on Wireless Communications}, vol.~13,
  no.~5, pp. 2809--2823, May 2014.

\bibitem{6525475}
W.~Yu, T.~Kwon, and C.~Shin, ``Multicell coordination via joint scheduling,
  beamforming, and power spectrum adaptation,'' \emph{IEEE Transactions on
  Wireless Communications}, vol.~12, no.~7, pp. 1--14, July 2013.

\bibitem{6811617}
H.~Dahrouj, W.~Yu, T.~Tang, J.~Chow, and R.~Selea, ``Coordinated scheduling for
  wireless backhaul networks with soft frequency reuse,'' in \emph{Proc. of the
  21st Europea Signal Processing Conference (EUSIPCO' 2013), Marrakech,
  Morocco}, Sept 2013, pp. 1--5.

\bibitem{117665}
A.~Douik, H.~Dahrouj, T.~Y. Al-Naffouri, and M.-S. Alouini, ``Coordinated
  scheduling for the downlink of cloud radio-access networks,'' \emph{Proc. of
  {IEEE} International Conference on Communications (ICC' 2015), London, UK.
  avaialble Arxiv e-prints}, vol. abs/1411.4144, 2015.

\bibitem{5464705}
B.~Rengarajan, A.~Stolyar, and H.~Viswanathan, ``Self-organizing dynamic
  fractional frequency reuse on the uplink of ofdma systems,'' in \emph{Proc.
  of 2010 44th Annual Conference on Information Sciences and Systems (CISS'
  2010), Princeton, New Jersey, USA}, March 2010, pp. 1--6.

\bibitem{15522856}
F.~V. Fomin, F.~Grandoni, and D.~Kratsch, ``A measure \& conquer approach for
  the analysis of exact algorithms,'' \emph{Journal of the ACM}, vol.~56,
  no.~5, pp. 25:1--25:32, Aug. 2009.

\bibitem{2155446}
N.~Bourgeois, B.~Escoffier, V.~T. Paschos, and J.~M.~M. van Rooij, ``A
  bottom-up method and fast algorithms for max independent set,'' in
  \emph{Proc. of the 12th Scandinavian Conference on Algorithm Theory (SWAT'
  2010), Bergen, Norway}.

\bibitem{6848102}
P.~jun Wan, X.~Jia, G.~Dai, H.~Du, and O.~Frieder, ``Fast and simple
  approximation algorithms for maximum weighted independent set of links,'' in
  \emph{Proc. of 33th IEEE Conference on Computer Communications (INFOCOM'
  2009), Toronto, canada}, April 2014, pp. 1653--1661.

\bibitem{5341909}
N.~Esfahani, P.~Mazrooei, K.~Mahdaviani, and B.~Omoomi, ``A note on the p-time
  algorithms for solving the maximum independent set problem,'' in \emph{Proc.
  of 2nd Conference on Data Mining and Optimization (DMO' 2009), Bandar Baru
  Bangi, Malaysia}, Oct 2009, pp. 65--70.

\end{thebibliography}

\appendices
\numberwithin{equation}{section}

\section{Proof of \thref{th1}}\label{app1}

To prove the result, the optimization problem \eref{Original_optimization_problem} is first reformulated as a search over the set of feasible schedules. Further, a one to one mapping between the possible schedules and the set of independent sets of size $Z_{tot}$ in the conflict graph is highlighted. Showing that the weight of each independent set is the objective function of \eref{Original_optimization_problem} indicates that the optimal solution is the maximum-weight independent set which concludes the proof.

All possible schedules representing the assignments between clouds, users, BSs and PZs, regardless of the feasibility, can be conveniently represented by the set of all subsets of $\mathcal{A}$, i.e., the power set $\mathcal{P}(\mathcal{A})$ of the set of associations $\mathcal{A}$. Recall that for an association $a=(c,u,b,z)$ in a schedule $\mathcal{S} \subseteq \mathcal{A}$ (i.e., $\mathcal{S} \in \mathcal{P}(\mathcal{A})$), the benefit of the association is given by $\pi(a)=\pi_{cubz}$. The following lemma reformulates the multi-cloud joint scheduling problem.
\begin{lemma}
The discrete optimization problem \eref{Original_optimization_problem} can be written as follows:
\begin{align}
\label{new_optimization_problem}
&\max_{\mathcal{S} \in \mathcal{P}(\mathcal{A})} \sum_{a \in \mathcal{S}} \pi(a) \\
&{\rm s.t.\ } \mathcal{S} \in \mathcal{F},
\end{align}
where $\mathcal{F}$ is the set of feasible schedules defined as follows:
\begin{subequations}
\begin{align}
&\mathcal{F}= \{ \mathcal{S} \in \mathcal{P}(\mathcal{A}) \text{ such that } \forall \ a \neq a^\prime \in \nonumber \mathcal{S}\\
& \delta(\varphi_u(a) - \varphi_u(a^\prime)) (1-(\delta(\varphi_c(a)-\varphi_c(a^\prime))) = 0, \label{eq1} \\
& (\varphi_c(a),\varphi_b(a),\varphi_z(a)) \neq (\varphi_c(a^\prime),\varphi_b(a^\prime),\varphi_z(a^\prime)) \label{eq2}, \\
& \delta(\varphi_u(a) - \varphi_u(a^\prime))\delta(\varphi_z(a) - \varphi_z(a^\prime)) =0 \label{eq3} \\
&|\mathcal{S}| = Z_{\text{tot}} \label{eq4} \}.
\end{align}
\end{subequations}
\label{l2}
\end{lemma}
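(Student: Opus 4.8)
The plan is to establish that the map sending a binary array $(X_{cubz})_{(c,u,b,z)\in\mathcal{A}}$ to the subset $\mathcal{S}=\{\,a=(c,u,b,z)\in\mathcal{A}: X_{cubz}=1\,\}$ is a bijection between the feasible region of \eref{Original_optimization_problem} and the family $\mathcal{F}$ of feasible schedules, and that it preserves the objective value. First I would note that in \eref{Original_optimization_problem} the variables $Y_{uz}$ and $Z_{cu}$ are auxiliary: constraint \eref{constraint4} fixes $Y_{uz}=\sum_{cb}X_{cubz}$ and constraint \eref{constraint1} fixes $Z_{cu}=1-\delta(\sum_{b,z}X_{cubz})$, so a feasible point of \eref{Original_optimization_problem} is entirely determined by $(X_{cubz})$. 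The assignment $(X_{cubz})\mapsto\mathcal{S}$ is then the standard indicator-to-subset correspondence between $\{0,1\}^{\mathcal{A}}$ and $\mathcal{P}(\mathcal{A})$, and since $\pi(a)=\pi_{cubz}$ it satisfies $\sum_{c,u,b,z}\pi_{cubz}X_{cubz}=\sum_{a\in\mathcal{S}}\pi(a)$. Hence it remains only to show that $(X_{cubz})$, with its induced $Y,Z$, is feasible for \eref{Original_optimization_problem} if and only if $\mathcal{S}\in\mathcal{F}$, which I would verify constraint by constraint.

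The easy translations are as follows. Constraint \eref{constraint5} is absorbed into the requirement $\mathcal{S}\in\mathcal{P}(\mathcal{A})$. Constraints \eref{constraint1}--\eref{constraint2} together say that for each user $u$ there is at most one cloud $c$ with $\sum_{b,z}X_{cubz}\ge 1$, which is exactly the statement that $\mathcal{S}$ contains no pair $a\ne a'$ with equal user index and distinct cloud indices, i.e. \eref{eq1}. Constraint \eref{constraint4} says $\sum_{cb}X_{cubz}\le 1$ for every $(u,z)$, i.e. $\mathcal{S}$ contains no pair $a\ne a'$ sharing both the user index and the PZ index, i.e. \eref{eq3}. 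In each case the ``$\forall\,a\ne a'$'' phrasing of the set condition and the summation form of the integer-programming constraint are just two descriptions of the same combinatorial fact, so these equivalences are routine.

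The one step that needs care, and which I expect to be the crux, is constraint \eref{constraint3}, namely $\sum_u X_{cubz}=1$ for all $(c,b,z)$. I would split it into its ``$\le 1$'' and ``$\ge 1$'' parts (the first being meaningful because the $X$ are binary). The ``$\le 1$'' part says no two distinct elements of $\mathcal{S}$ carry the same triple $(\varphi_c,\varphi_b,\varphi_z)$, which is precisely \eref{eq2}; since there are $|\mathcal{C}||\mathcal{B}||\mathcal{Z}|=Z_{\text{tot}}$ such triples, this already forces $|\mathcal{S}|\le Z_{\text{tot}}$. The ``$\ge 1$'' part additionally forces every triple to be used, hence $|\mathcal{S}|=Z_{\text{tot}}$, which is \eref{eq4}. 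Conversely, given \eref{eq2} together with the cardinality clause \eref{eq4}, a pigeonhole count over the $Z_{\text{tot}}$ triples shows each triple is used exactly once, i.e. $\sum_u X_{cubz}=1$; this is exactly why the explicit size constraint \eref{eq4} must be built into the definition of $\mathcal{F}$, since the set conditions \eref{eq1}--\eref{eq3} alone only encode ``at most one user per PZ.'' With feasibility and objective both matched under the bijection, the two problems have the same optimal value and the same set of optimizers (and if $\mathcal{F}=\emptyset$ the correspondence equally shows \eref{Original_optimization_problem} is infeasible), which proves \lref{l2}.
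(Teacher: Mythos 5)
Your proposal is correct and follows essentially the same route as the paper's proof: the indicator-to-subset bijection, the objective identity $\sum_{c,u,b,z}\pi_{cubz}X_{cubz}=\sum_{a\in\mathcal{S}}\pi(a)$, and the same constraint pairing (\eref{constraint1}--\eref{constraint2} with \eref{eq1}, \eref{constraint3} with \eref{eq2} and \eref{eq4}, \eref{constraint4} with \eref{eq3}); the paper merely phrases the pairwise conditions via partial schedules $\mathcal{S}_{cu}$, $\mathcal{S}_{cbz}$, $\mathcal{S}_{uz}$. Your explicit pigeonhole argument for the converse of the \eref{constraint3} step is a point the paper leaves somewhat implicit, but it is the same underlying idea.
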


\begin{proof}
The proof can be found in \appref{app5}.
\end{proof}

To demonstrate that there is a one to one mapping between the set of feasible schedules $\mathcal{F}$ and the set of independent sets $\mathcal{I}$ of size $Z_{\text{tot}}$, we first show that each element of $\mathcal{F}$ is represented by a unique element in $\mathcal{I}$. We, then, show that each independent set can uniquely be represented by a feasible schedule.

Let the feasible schedule $\mathcal{S} \in \mathcal{F}$ be associated with the set of vertices $I$ in the conflict graph. Assume $\exists \ v \neq v^\prime \in I$ such that $v$ and $v^\prime$ are connected. From the connectivity conditions in the conflict graph, vertices $v$ and $v^\prime$ verify one of the following conditions
\begin{itemize}
\item CC1: $\delta(\varphi_u(v) - \varphi_u(v^\prime))(1-\delta(\varphi_c(v) - \varphi_c(v^\prime))) = 1$: this condition violate the constraint \eref{eq1} of the construction of $\mathcal{F}$.
\item CC2: $(\varphi_c(v),\varphi_b(v),\varphi_z(v)) = (\varphi_c(v^\prime),\varphi_b(v^\prime),\varphi_z(v^\prime))$: this condition violate the constraint \eref{eq2} of the construction of $\mathcal{F}$.
\item CC3: $\delta(\varphi_u(v) - \varphi_u(v^\prime))\delta(\varphi_z(v) - \varphi_z(v^\prime)) = 1$: this condition violate the constraint \eref{eq3} of the construction of $\mathcal{F}$.
\end{itemize}
Therefore, each pair of vertices $v \neq v^\prime \in I$ are not connected which demonstrate that $I$ is an independent set of vertices in the conflict graph. Finally, from the construction constraint \eref{eq4}, $\mathcal{S}$ and by extension $I$ have $Z_{\text{tot}}$ association. Therefore, $I$ is a set of $Z_{\text{tot}}$ independent vertices which concludes that $I \in \mathcal{I}$. The uniqueness of $I$ follows directly from the bijection between the set of vertices in the graph and the set of associations in $\mathcal{A}$.

To establish the converse, let $I \in \mathcal{I}$ be an independent set of size $Z_{\text{tot}}$ and let $\mathcal{S}$ be its corresponding schedule. Using an argument similar to the one in previous paragraph, it can be easily shown that all the associations in $\mathcal{S}$ verify the constraints \eref{eq1}, \eref{eq2}, and \eref{eq3}. Given that $I$ is of size $Z_{\text{tot}}$, then $\mathcal{S}$ verify \eref{eq4} which concludes that $\mathcal{S} \in \mathcal{F}$. Uniqueness of the element is given by the same argument as earlier.

To conclude the proof, note that the weight of an independent set $I \in \mathcal{I}$ is equal to the objective function \eref{new_optimization_problem} and by extension to the original objective function \eref{Original_optimization_problem}. Therefore The global optimal solution of the joint scheduling problem in multi-cloud network \eref{Original_optimization_problem} is equivalent to a maximum-weight independent set among the independent sets of size $Z_{\text{tot}}$ in the conflict graph.

\section{Proof of \lref{l1}}\label{app4}

Note that the constraints \eref{constraint6}, \eref{constraint7} and \eref{constraint8} of the optimization problem \eref{full_problem} are the same constraint as \eref{constraint3}, \eref{constraint4} and \eref{constraint5}, respectively, in the original optimization problem \eref{Original_optimization_problem}. Therefore, this lemma can be proved using steps similar to the one used in \thref{th1}.

Let $\mathcal{F} \subset \mathcal{P}(\mathcal{A})$ be the set of feasible schedules. Given the mapping between the original constraints of the problem and the constraints of constructing the set $\mathcal{F}$ illustrated in \lref{l2}, it can be easily shown that problem \eref{full_problem} can be written as follows:
\begin{align}
&\max_{\mathcal{S} \in \mathcal{P}(\mathcal{A})} \sum_{a \in \mathcal{S}} \pi(a) \\
&{\rm s.t.\ } \mathcal{S} \in \mathcal{F},
\end{align}
where $\mathcal{F}$ is the set of feasible schedules defined as follows:
\begin{subequations}
\begin{align}
&\mathcal{F}= \{ \mathcal{S} \in \mathcal{P}(\mathcal{A}) \text{ such that } \forall \ a \neq a^\prime \in \nonumber \mathcal{S}\\
& (\varphi_c(a),\varphi_b(a),\varphi_z(a)) \neq (\varphi_c(a^\prime),\varphi_b(a^\prime),\varphi_z(a^\prime)) , \\
& \delta(\varphi_u(a) - \varphi_u(a^\prime))\delta(\varphi_z(a) - \varphi_z(a^\prime)) =0 \\
&|\mathcal{S}| = Z_{\text{tot}} \}.
\end{align}
\end{subequations}

Let the reduced conflict graph be constructed by generating a vertex of each association $a \in \mathcal{A}$ and connecting two distinct vertices $v$ and $v^\prime$ if one of the following two conditions holds:
\begin{itemize}
\item CC2: $(\varphi_c(v),\varphi_b(v),\varphi_z(v)) = (\varphi_c(v^\prime),\varphi_b(v^\prime),\varphi_z(v^\prime))$.
\item CC3: $\delta(\varphi_u(v) - \varphi_u(v^\prime))\delta(\varphi_z(v) - \varphi_z(v^\prime)) = 1$.
\end{itemize}

Define $\mathcal{I}$ as the set of the independent set of vertices of size $Z_{\text{tot}}$ in the reduced conflict graph. Following steps similar to the one used in \thref{th1}, it can be shown that there is a one to one mapping between the set of feasible schedule $\mathcal{F}$ and the set $\mathcal{I}$ and that the objective function is represented by the sum of the weight of the vertices in the independent set. As a conclusion, the optimal solution to the scheduling problem \eref{full_problem} in signal-level coordinated cloud-enabled network is the maximum-weight independent set of size $CBZ$ in the reduced conflict graph.

\section{Proof of \lref{l2}}\label{app5}

To prove this lemma, it is sufficient to prove to that the objective function and the constraints of \eref{Original_optimization_problem} are equivalent to those of the optimization problem \eref{new_optimization_problem}. The objective function of \eref{Original_optimization_problem} is equivalent to the one of \eref{new_optimization_problem} as shown in the following equation:
\begin{align}
\sum_{c,u,b,z}\pi_{cubz}X_{cubz} = \sum_{a \in \mathcal{A}}\pi(a)X(a) = \sum_{a \in \mathcal{S}}\pi(a),
\end{align}
where $X(a)$ is defined in the same manner as $\pi(a)$, i.e., $X(a)= X_{cubz}$ for $a=(c,u,b,z) \in \mathcal{A}$ and $\mathcal{S} =\{a \in \mathcal{A} \ | \ X(a)=1 \}$. Therefore, the two objective functions are equivalent:
\begin{align}
\max \sum_{c,u,b,z}\pi_{cubz}X_{cubz} = \max_{\mathcal{S} \in \mathcal{P}(\mathcal{A})} \sum_{a \in \mathcal{S}}\pi(a).
\end{align}

In what follows, the constraints \eref{constraint1} and \eref{constraint2} are shown to be equivalent to the constraint \eref{eq1}, the constraint \eref{constraint3} is proven to be equivalent to \eref{eq2} and \eref{eq4}. Finally to conclude the proof, \eref{constraint4} is demonstrated to be the same constraint as \eref{eq3}.

Define $\mathcal{S}_{cu} \subset \mathcal{S}$ as the set of associations in schedule $\mathcal{S}$ concerning the $c$th cloud and the $u$th user. The expression of the set is the following:
\begin{align}
\mathcal{S}_{cu} = \left\{ a \in \mathcal{S} \ | \ \varphi_c(a)=c ,\ \varphi_u(a)=u \right\}.
\end{align}
Let $\mathcal{S}_{u} \subset \mathcal{P}(\mathcal{S})$ be the set of all the set concerning user $u$ defined as:
\begin{align}
\mathcal{S}_{u} = \left\{ \mathcal{S}_{cu}, c \in \mathcal{C} \right\}.
\end{align}
The constraints \eref{constraint1} (i.e., $Z_{cu} = 1 - \delta \bigg(\sum_{b,z}X_{cubz}\bigg)$) and \eref{constraint2} (i.e., $\sum_{c}Z_{cu} \leq 1$) are equivalent to the following constraint
\begin{align}
Z_{cu} = 1 - \delta \bigg(\sum_{b,z}X_{cubz}\bigg) \leq 1 \Leftrightarrow |\mathcal{S}_u| \leq 1 , \ \forall \ u.
\label{l2:eq6}
\end{align}
We now show that the inequality $|\mathcal{S}_u| \leq 1$ is equivalent to the following equality $\forall \ a \neq a^\prime \in \mathcal{S}$:
\begin{align}
\delta(\varphi_u(a) - \varphi_u(a^\prime)) (1-(\delta(\varphi_c(a)-\varphi_c(a^\prime))) = 0
\label{l2:eq7}
\end{align}

First note that if $a \in \mathcal{S}_{u}$ and $a^\prime \in \mathcal{S}_{u^\prime}$ with $u \neq u^\prime$, then $\varphi_u(a) \neq \varphi_u(a^\prime)$ which concludes that \eref{l2:eq7} holds for such $a$ and $a^\prime$. Now let $a \neq a^\prime \in \mathcal{S}_{u}$. Since $|\mathcal{S}_u| \leq 1$ then $\exists$ unique $c \in \mathcal{C}$ such that $\mathcal{S}_{cu} \neq \varnothing$. Hence $a \neq a^\prime \in \mathcal{S}_{cu}$, i.e., $\varphi_c(a) = \varphi_c(a^\prime)$ which concludes that \eref{l2:eq7} holds for such $a$ and $a^\prime$. Given that $\mathcal{S}$ can be written as $\bigcup_{u} \mathcal{S}_u$, then \eref{l2:eq7} is valid $\forall \ a \neq a^\prime \in \mathcal{S}$. Combining \eref{l2:eq6} and \eref{l2:eq7} proves that the constraints \eref{constraint1} and \eref{constraint2} are equivalent to the constraint \eref{eq1}.

Define $\mathcal{S}_{cbz} \subset \mathcal{S}$ as the set of associations in schedule $\mathcal{S}$ concerning the $z$th PZ in the $b$th BS connected to the $c$th cloud. The expression of the set is the following:
\begin{align}
\mathcal{S}_{cbz} = \left\{ a \in \mathcal{S} \ | \ \varphi_c(a)=c ,\ \varphi_b(a)=b ,\ \varphi_z(a)=z \right\}.
\end{align}
The constraint \eref{constraint3} can be written as a function of the partial schedules as follows:
\begin{align}
\sum_{u}X_{cubz}=1 \Leftrightarrow |\mathcal{S}_{cbz}|=1, \ \forall \ (c,b,z).
\label{l2:eq1}
\end{align}
Assume $\exists \ a \neq a^\prime \in \mathcal{S}$ such that $\varphi_c(a)=\varphi_c(a^\prime) ,\ \varphi_b(a)=\varphi_b(a^\prime)$ , and $\varphi_z(a)=\varphi_z(a^\prime)$. It is clear that $a,a^\prime \in \mathcal{S}_{cbz}$ where $c=\varphi_c(a) ,\ b=\varphi_b(a)$, and $z=\varphi_z(a)$. However, from \eref{l2:eq1}, we have $|\mathcal{S}_{cbz}|=1$. Therefore, $a = a^\prime$ which concludes that, $\forall \ a \neq a^\prime \in \mathcal{S}$, we have:
\begin{align}
(\varphi_c(a),\varphi_b(a),\varphi_z(a)) \neq (\varphi_c(a^\prime),\varphi_b(a^\prime),\varphi_z(a^\prime)).
\label{l2:eq2}
\end{align}
We now show that $\mathcal{S}_{cbz} \cap \mathcal{S}_{c^\prime b^\prime z^\prime} = \varnothing$ for all sets in which at least one of the following holds: $c \neq c^\prime$, and/or $b \neq b^\prime$, and/or $z \neq z^\prime$. From \eref{l2:eq1}, both sets contains a single association, hence $\mathcal{S}_{cbz} \cap \mathcal{S}_{c^\prime b^\prime z^\prime} \neq \varnothing$ means that $\mathcal{S}_{cbz} = \mathcal{S}_{c^\prime b^\prime z^\prime}$ which do not hold since $c \neq c^\prime$, and/or $b \neq b^\prime$, and/or $z \neq z^\prime$. As a conclusion, the cardinality of the schedule $\mathcal{S}$ can be written as:
\begin{align}
|\mathcal{S}| = \left|\bigcup_{c,b,z} \mathcal{S}_{cbz} \right| = \bigcup_{c,b,z} \left|\mathcal{S}_{cbz} \right| = CBZ = Z_{\text{tot}} .
\label{l2:eq3}
\end{align}
The combination of equations \eref{l2:eq1}, \eref{l2:eq2} and \eref{l2:eq3} shows that the constraint \eref{constraint3} is equivalent to \eref{eq2} and \eref{eq4}.

Define $\mathcal{S}_{uz} \subset \mathcal{S}$ as the set of associations in schedule $\mathcal{S}$ concerning the $u$th user scheduled in the $z$th PZ of one of the connected BS. The expression of the set is the following:
\begin{align}
\mathcal{S}_{uz} = \left\{ a \in \mathcal{S} \ | \ \varphi_u(a)=u ,\ \varphi_z(a)=z \right\}.
\end{align}
The constraint \eref{constraint4} can be written as a function of the partial schedules as follows:
\begin{align}
Y_{uz} = \sum_{cb} X_{cubz} \leq 1 \Leftrightarrow |\mathcal{S}_{uz}| \leq 1, \ \forall \ (u,z).
\label{l2:eq4}
\end{align}
To conclude the proof, it is sufficient to show that, if $ |\mathcal{S}_{uz}| \leq 1, \ \forall \ (u,z)$, then the following equation holds for $\ a \neq a^\prime \in \mathcal{S}$:
\begin{align}
\delta(\varphi_u(a) - \varphi_u(a^\prime))\delta(\varphi_z(a) - \varphi_z(a^\prime)) =0.
\label{l2:eq5}
\end{align}
Let the schedule be partitioned into partial schedules as follows $\mathcal{S} = \bigcup_{uz}\mathcal{S}_{uz}$. For $a \in \mathcal{S}_{uz}$ and $a^\prime \in \mathcal{S}_{u^\prime z^\prime} \neq \mathcal{S}_{uz}$, it is clear that either $u \neq u^\prime$ and/or $z \neq z^\prime$. Hence, equality \eref{l2:eq5} holds for all $a \in \mathcal{S}_{uz}$ and $a^\prime \in \mathcal{S}_{u^\prime z^\prime} \neq \mathcal{S}_{uz}$. Given that $|\mathcal{S}_{uz}| \leq 1$, then $\nexists \ a \neq a^\prime \in \mathcal{S}_{uz}, \ \forall \ (u,z)$ which concludes that \eref{l2:eq5} is verified. The combination of equations \eref{l2:eq4}, and \eref{l2:eq5} shows that the constraint \eref{constraint4} is equivalent to \eref{eq3}.

\end{document}